\documentclass[letterpaper]{IEEEtran}
\usepackage{amsfonts}
\usepackage[dvips]{graphicx}
\usepackage{epsfig,latexsym}
\usepackage{float}
\usepackage{indentfirst}
\usepackage{amsmath}
\usepackage{amssymb}
\usepackage{times}
\usepackage{subfigure}
\usepackage{stfloats,algorithm,algorithmic,bm}
\usepackage{psfrag}
\usepackage{cite}
\usepackage{subfigure}
\newtheorem{theorem}{$\mathbf{Theorem}$}
\newtheorem{lemma}[theorem]{$\mathbf{Lemma}$}

\begin{document}
\title{User Access Mode Selection in Fog Computing Based Radio Access Networks}
\author{Shi Yan, Mugen Peng,~\IEEEmembership{Senior~Member,~IEEE,} and Wenbo~Wang~\IEEEmembership{Member,~IEEE}.\\
{Key Laboratory of Universal Wireless Communication, Ministry of Education\\
Beijing University of Posts and Telecommunications, Beijing, 100876, China}}\maketitle

\begin{abstract}

Fog computing based radio access network is a promising
paradigm for the fifth generation wireless communication
system to provide high spectral and energy efficiency. With the
help of the new designed fog computing based access points (F-APs),
the user-centric objectives can be achieved through the adaptive
technique and will relieve the load of fronthaul and alleviate the
burden of base band unit pool.
In this paper, we derive the coverage probability and ergodic rate for both F-AP
users and device-to-device users by taking into account the different nodes locations,
cache sizes as well as user access modes.
Particularly, the stochastic geometry tool is used to derive expressions for above performance metrics. Simulation results validate the accuracy of our analysis and we obtain interesting tradeoffs that depend on the effect of
the cache size, user node density, and the quality of service constrains on the different performance metrics.

\end{abstract}

\section{INTRODUCTION}
The fifth generation (5G) mobile wireless system is  proposed to be initially deployed in 2020. Compared with the current  fourth generation (4G) mobile wireless system, it is given the envision of 1000 times higher wireless area capacity and is expected to save up to 90\% of
energy consumption per service compared with the current
fourth generation (4G) mobile wireless system
\cite{C1}. To achieve these
goals and alleviate the existing challenges in cloud radio access networks (C-RANs)\cite{C-RAN}\cite{H-CRAN}, the fog computing radio access network (F-RAN) has been proposed as
a new network architecture by incorporating of fog computing, edge storage
and centralized cloud computing into radio access networks \cite{F-RAN}.

Fog computing, which is similar to edge computing, is first proposed by Cisco \cite{Fog}. It extends cloud computing and services to the edge of the network. In F-RANs, services cannot only be executed in a centralized unit such as the BBU pool in C-RAN, but also can be hosted at smart terminal devices which are closer to the users. Meanwhile, through the user-centric adaptive techniques such as device-to-device (D2D), distributed coordination, and large-scale centralized
cooperation, users don't have to connect to the centralized cloud computing unit to complete the data transmission, which will relieve the load of fronthaul and alleviate the burden of BBU pool.
In order to execute the above, the traditional access point (AP) is evolved
to the fog computing based access point (F-AP) through equipped with a certain caching and sufficient
computing capabilities to execute the local cooperative signal processing in the physical layer.

Many previous works have been done to analyze the ergodic rate performance of C-RAN systems. In \cite{b9}, the ergodic rate of distributed remote radio heads (RRHs)
is characterized in C-RAN with spatially single antenna
random locations and the minimum number of RRHs for the desired user to meet a predefined quality of service is analyzed.
In \cite{b10}, it is demonstrated that the large scale
fading exponent has a significant impact on the capacity of large
C-RAN systems.

However, the fronthaul limitation between RRHs and Cloud is a remarkable challenge to block the commercial practices.
Consequently, taking advantage of the fog-computing to switch the content cache or data process to users is the key to improve both spectrum and energy efficiency as well as can relieve the load of fronthaul in cloud computing based network architectures.
D2D communications as an adaptive technology not only can provide the efficient utilization of available radio resources to improve the connectivity of devices, but also has the ability of supporting proximity-based services, such as social networking applications and content sharing. Existing research on D2D networks is mainly focused on underlaid cellular networks with fixed location model \cite{b12} $\sim$ \cite{b14}. Different from the traditional underlaid cellular networks, the main challenge
to analyze the rate performance in F-RAN is that the F-APs are often deployed randomly and the severe intra-tier and inter-tier interference may drastically deteriorate the performance of both the D2D and F-AP users.

In this paper, we analyze the coverage probability and ergodic rate with
three user access modes in a downlink F-RAN,
where both the F-AP nodes and D2D users are modeled as spatial Poisson Point Process (PPP)
distribution. The
contributions are three-folds.
\begin{itemize}
\item The ergodic rates of D2D mode, nearest F-AP mode, local distributed coordination mode, and coverage probability of the first two modes in F-RAN system are
characterized, where both the intra-tier and inter-tier
interference and distributed cache are considered.

\item The closed-form expressions for the ergodic rate are presented in some special cases,
which can make the analysis not only tractable, but also flexible. Moreover, based on the proposed performance metrics, the impacts of the cache size, user node density, and the quality of service (QoS) constrains are
characterized.

\item The Monte Carlo simulation results evaluate impacts of the F-AP nodes density, SIR threshold, cache size and association schemes on the ergodic rate. And an adaptation user access mode
selection mechanism is proposed to improve F-RAN system performances.
\end{itemize}


\section{SYSTEM MODEL}
\subsection{F-RAN System Model}

A F-RAN downlink system is considered in this paper,
where a group of F-APs are deployed according to a two-dimensional PPP $\Phi_f$
with density of $\lambda_f$ in a disc plane ${\cal{D}}^2$. As the new designed AP,
F-AP integrates not only the front radio frequency (RF) but also the physical processing
functionalities and procedures of the upper layers, which made F-AP has a sufficient
computing capabilities to execute the local cooperative signal processing in the physical
layer and implement the caching resource management.

We assume the spatial distribution of users is
modeled as an independent PPP $\Phi_u$ distribution with constant intensity
$\lambda_u$. By setting $p\sim(0,1]$ as the probability that a user support
direct connection to other intelligent terminal users, i.e., D2D,
the distribution of the D2D users location can be denoted as a
thinning homogeneous $\Phi_{du}$ with the density of $\lambda_{du}=p\lambda_u$.
Meanwhile according to Marking Theorem, the distribution of F-AP users follows a
stationary PPP $\Phi_{fu}$ with the density of $\lambda_{fu}=(1-p)\lambda_u$.
Without any loss of generality,
each F-AP and D2D user is assumed as single antenna configuration with
a fixed transmission power, defined as $P_f$ and $P_d$, respectively,
and our analysis is focus on a desired user (denoted by $U$) located at
the origin of the disc ${\cal{D}}^2$.


\subsection{Cache Model}
In this paper, we consider there are $N$ video contents in the network, and all the
video contents are assumed to have the same distribution. Each of
the D2D user and F-AP has a limited caching storage space with the size of $C_d$ and $C_f$, respectively,
and $C_d<C_f<N$.

In previous research, it has been found that people are always interested in
the most popular video contents \cite{Zipf}. In other word, only a small
portion of the $N$ contents are frequently accessed by the majority
of users. Therefore, the demand probability of the $i$-th popularity video content
can be modeled as the following Zipf distribution

\begin{equation}\label{popu}
{f_i(\sigma, N)} = \frac{{1/{i^\sigma }}}{{\sum\nolimits_{k = 1}^N {1/{k^\sigma }} }},
\end{equation}
where the video content with a smaller index has a larger probability of being requested by users, i.e. $f_i(\sigma, N)>f_j(\sigma, N)$, if $i<j$.
 Zipf exponent $\sigma>0$ controls
the relative popularity of files, and with the larger $\sigma$ the caching storage has a fewer of popular video contents accounting for the majority of the requests.

Content caching probability is defined as the probability of an event that the desired user $U$ can find it requested video content $V$ in its corresponding caching, i.e., $p_c^x={\Pr }(V \in C_x)$, where $x$ denotes the node of user accessed. By setting the caching storages in D2D users and F-APs can only cache the most popularly requested video contents, the content caching probabilities of each D2D user and F-AP can be respectively denoted as

\begin{equation}\label{ccpd}
p^{D}_c={\Pr }(V \in C_d)={\sum\nolimits_{i=1}^{C_d}}{f_i(\sigma_d, N)}
\end{equation}
\begin{equation}\label{ccpap}
p^{F}_c={\Pr }(V \in C_f)={\sum\nolimits_{i=1}^{C_f}}{f_i(\sigma_f, N)}.
\end{equation}

As \eqref{ccpd} and \eqref{ccpap} shown, the same kind of node has the same content caching probability. In other words, each D2D user in area ${\cal{D}}^2$ store the same content cache, and video contene cached in different F-APs are also the same.

\subsection{User Access Modes}
In this paper, we consider users will access to the F-RAN by three user-centric access modes according to users' communication distance, content caching probability and the QoS requirements, named: D2D mode, nearest F-AP mode and local distributed coordinated mode. Let $U \to X$ signify that desired user $U$ is associated to a node located at $X$, and ${\left\| X\right\|}$ denotes the distance between $U$ and $X$.


\begin{itemize}
\item \textbf{D2D mode:} D2D mode is enabled when the desired user $U$ support D2D mode and it can successfully obtain the requested contents from another D2D user in a known location within a distance threshold $L_d$ meanwhile the signal-to-interference ratio (SIR) $\gamma_d$ between the two D2D users is larger than a pre-set SIR threshold $T_d$. Thus,
\end{itemize}
\begin{equation}\label{d2d}
\begin{gathered}
    \Psi_{D} =\{ X_d: X_d \in \Phi_{du} , {\left\| X\right\|} \leq L_d, \hfill\\
   \quad\quad\quad V \in C_d, \gamma_d \geq T_d  \}.\hfill
\end{gathered}
\end{equation}

\begin{itemize}
\item \textbf{ Nearest F-AP mode:} When the desired user $U$ does not support D2D mode, or $U$ support D2D mode but the requested content $V$ is not cached in its nearby D2D user or the SIR $\gamma_d$ is not achieved the SIR threshold $T_d$. Thus, $U$ try to access its nearest F-AP node which can respond to the desired user's content request, and the SIR $\gamma_f$ between them is larger than SIR threshold $T_f$. The associated F-AP for user $U$ can be obtained as:
\end{itemize}
\begin{equation}\label{1fap}
\begin{gathered}
    \Psi_{F}= \{ X_f:  \mathop {\arg \min }\limits_{X \in {\Phi _{f}}} \left( {\left\| {{X}} \right\|} \right), V \in C_f, \gamma_f \geq T_f, \hfill\\
    \quad\quad\quad U \notin \Phi_{du} \cup V \notin C_d \cup \gamma_d < T_d,  \}.\hfill
\end{gathered}
\end{equation}

\begin{itemize}
\item \textbf{ Local distributed coordination mode:} The local distributed coordination mode means that the desired user $U$ associates to multiple F-APs near to it in a user-centric cluster with a radius $L_c$. In this mode F-RAN can adjust the value of $L_c$ to satisfy well with the requirement of video content and the quality of SIR.
\end{itemize}
\begin{equation}\label{cluster}
    \Psi_{C}= \{ X_c:  X \in \Phi_{f}, \forall X \in B(U,L_c) \cap \Phi_{f} \},\hfill
\end{equation}
where $B(a, b)$ denote all the point in a circle centered in $a$ with radius $b$.

\subsection{Signal-to-Interference Ratio}

In this paper, we focus on the interference-limited scenario
since the interference is much larger than the noise, i.e., the noise can be neglected
and the SIR is dominated.
Path loss is represented by ${\left\| X \right\|^{ - \alpha }}$, where $\alpha>2$ is the path loss exponent. We denote $\alpha_d$ as the path loss exponent for D2D user link and $\alpha_f$ as the path loss exponent for F-AP to common user or F-AP to D2D user link.

Then, if $U$ is served by a D2D user which has a fixed distance of $\left\| {{X_d}} \right\|$ to $U$, the received SIR at the desired user is given by

\begin{equation}\label{SIRd}
SIR(U \to {X_d}){\rm{ }} = \gamma_d= \frac{{{P_d}{h_d}{{\left\| {{X_d}} \right\|}^{ - \alpha_d}}}}{{{I_{d,du}} + {I_{f,du}}}},
\end{equation}
where $h_d$ characterize the flat Rayleigh channel fading between two D2D users, and ${{\left\| {{X_d}} \right\|}^{ - \alpha_d }}$ denotes the path loss.
${I_{d,du}} = {\sum _{i \in {\Phi _{du}}/{d}}}{P_d}{g_i}r_i^{ - \alpha_d }$ denotes interference from other D2D users, $ g_i \sim \exp (1)$ and $r_i^{ - \alpha_d }$ denote the exponentially distributed fading power over the Rayleigh fading channel and path loss from other D2D user to $U$, respectively. ${I_{f,du}} = {\sum _{j \in {\Phi _f}}}{P_f}{g_j}l_j^{ - \alpha_f }$ denotes inter-tier interference from F-APs, the definition of ${g_j}$ and $l_j^{ - \alpha_f }$ are similar to that in ${I_{d,du}}$.

Next, if $U$ is served by a single nearest F-AP in nearest F-AP mode, the SIR is given by

\begin{equation}\label{SIRf}
SIR(U \to {X_f}){\rm{ }} = \gamma_f= \frac{{{P_f}{h_f}{{\left\| {{X_f}} \right\|}^{ - \alpha_f }}}}{{{I_{f,fu}} + {I_{d,fu}}}},
\end{equation}
where ${I_{f,fu}} = {\sum _{{i'} \in {\Phi _f}/f}}{P_f}{g_i'}l_{i'}^{ - \alpha_f }$, and ${I_{d,fu}} = {\sum _{{j'} \in {\Phi _{du}}}}{P_d}{g_j'}r_{j'}^{ - \alpha_f }$ denote the intra-tier interference from other F-APs and inter-tier interference from D2D users, respectively.

Finally, if the local distributed coordination mode is selected, $U$ is not only
served by the single nearest F-AP, but several potential F-APs which form a F-AP cluster in \eqref{cluster}. The F-AP cluster formation
leads to that the received signal of the desired user is a sum form and after considering the inter-tier and intra-tier interference, the received SIR of $U$ in local distributed coordination mode can be given by

\begin{equation}\label{SIRc}
SIR(U \to {X_c}){\rm{ }} = \gamma_c= \frac{{\sum _{{c} \in {\Psi _{C}}}{P_f}{h_c}{{\left\| {{X_c}} \right\|}^{ - \alpha_f }}}}{{{I_{f,cu}} + {I_{d,fu}}}},
\end{equation}
where the intra-tier interference from the F-APs out of the cluster denotes as ${I_{f,cu}} = {\sum _{{v} \in {\Phi _f}/{\Psi _C}}}{P_f}{g_v}l_{v}^{ - \alpha_f }$, and the inter-tier interference expression $I_{d,fu}$ is equal to that in \eqref{SIRf}.

\section{PERFORMANCE ANALYSIS}

In this section, we derive the coverage probability and ergodic rate for F-RAN with three different association modes. The ergodic rate is defined as $R_x=p_x\mathbb{E}\left[\rm{ln}\left(1+SIR(U \to X_x)\right)|SIR(U \to X_x)>T_x\right]$, where $p_x$ denote the probability of the desired user $U$ select the $x$ mode and the unit of the ergodic rate is in terms of nats/s/Hz. $\mathbb{E}(\cdot)$ is the expectation respect to the channel fading distribution as well as the locations of the random transmitter nodes.

\subsection{D2D mode}
In D2D communications, a direct link is established between the desired user $U$ and its service D2D user which has a known location $X_d$. The probability of $X_d$ located in distance threshold $L_d$ meanwhile has the requested content $V$ can be given as
\begin{equation}\label{D2Dp}
p_D=p(1-\exp({-\pi \lambda_{du} p_c^D L_d^2})).
\end{equation}

\begin{proof}
By using the property of 2-D Poisson process, the probability distribution of the nodes number $m$ in a circle area $\pi l^2$ with radius limit $l$ can be derived as

\begin{equation}\label{PM}
{\rm{P_r}}\left\{ {\Phi \left( {\pi {l^2}} \right) = {m}} \right\}
 = \frac{{{{\left( {{\lambda_X}\pi {l^2}} \right)}^{{m}}}{e^{ - {\lambda_X}\pi {l^2}}}}}{{\left( {{m}} \right)!}}.
\end{equation}

Let $l=L_d$, $\lambda_X=p_c^D \lambda_{du}$ and $m=0$. Then, we have the probability of none D2D user has the the requested video content $V$ within the distance limit $L_d$. Therefore, \eqref{D2Dp} can be given as the probability of complementary events and multiply the probability of $U$ support D2D mode.
\end{proof}

The probability of SIR $\gamma_d$ between the two D2D users larger than a SIR threshold $T_d$ is also called coverage probability. And in D2D mode

 \begin{equation}\label{PD}
\begin{array}{l}
 P_D(T_d, \alpha_f, \alpha_d, \left\| {{X_d}} \right\|) = \Pr \left( {\frac{{{P_d}{h_d}{{\left\| {{X_d}} \right\|}^{ - {\alpha _d}}}}}{{{I_{d,du}} + {I_{f,du}}}} \ge {T_d}} \right) \\
  = \Pr \left( {{h_d} \ge \frac{{{T_d}{{\left\| {{X_d}} \right\|}^{  {\alpha _d}}}}}{{{P_d}}}\left( {{I_{d,du}} + {I_{f,du}}} \right)} \right) \\
 \mathop  = \limits^{\left( a \right)} \mathbb{E}\left[ {\exp \left( { - \frac{{{T_d}{{\left\| {{X_d}} \right\|}^{  {\alpha _d}}}}}{{{P_d}}}\left( {{I_{d,du}} + {I_{f,du}}} \right)} \right)} \right] \\
 \mathop  = \limits^{\left( b \right)} {L_{{I_{d,du}}}}\left( {\frac{{{T_d}{{\left\| {{X_d}} \right\|}^{  {\alpha _d}}}}}{{{P_d}}}} \right){L_{{I_{f,du}}}}\left( {\frac{{{T_d}{{\left\| {{X_d}} \right\|}^{  {\alpha _d}}}}}{{{P_d}}}} \right) \\
=\exp \left( { - \pi {{\left\| {{X_d}} \right\|}^{\frac{{2{\alpha _d}}}{{{\alpha _f}}}}}\left( {{\lambda _{du}} + {{\left( {\frac{P_f}{P_d}} \right)}^{\frac{2}{{{\alpha _f}}}}}{\lambda _f}} \right)C\left( {{\alpha _f}} \right)T_d^{\frac{2}{{{\alpha _f}}}}} \right), \\
 \end{array}
 \end{equation}
where (a) follows from the Laplace transform of $h_d \sim \exp(1)$ and the independence of $I_{d,du}$ and $I_{f,du}$ \cite{La}\cite{La2}. (b) follows from letting $s = {{{{T_d}{{\left\| {{X_d}} \right\|}^{ - {\alpha _D}}}}}/{{{P_d}}}}$ in the Laplace transforms of $I_{d,du}$ and $I_{f,du}$, $C\left( \alpha_f  \right) = {{2{\pi}\csc \left( {{{2\pi }}/{\alpha_f }} \right)}}/{\alpha_f }$.

Then, the ergodic rate for D2D mode under the conditions in \eqref{d2d} can be derived as
 \begin{equation}\label{RD}
\begin{gathered}
 {R_d} = p_D\mathbb{E} \left[ {\ln \left( {1 + {\gamma _d}} \right)} | \gamma_d \ge T_d\right] \hfill\\
 \mathop  \approx \limits^{\left( a \right)}  p_D\ln(T_d)P_D(T_d, \alpha_f,\alpha_d \left\| {{X_d}} \right\|) - \frac{{p_D{\alpha _f}}}{2}\hfill\\
\cdot{\rm{Ei}}\left[ { -T_d^{\frac{2}{\alpha_f}} \pi {{\left\| {{X_d}} \right\|}^{\frac{{2{\alpha _d}}}{{{\alpha _f}}}}}\left( {{\lambda _{du}} + {{\left( {\frac{P_f}{P_d}} \right)}^{\frac{2}{{{\alpha _f}}}}}{\lambda _f}} \right)C\left( {{\alpha _f}} \right)} \right], \hfill\\
\end{gathered}
 \end{equation}
where $(a)$ follows in the high SIR conditions ${\ln}\left( {1 + \gamma_d} \right) \to {\ln }\left( \gamma_d \right)$, Ei$[s]=-\int_{-s}^\infty{{e^{-t}}/{t}\rm{d}t}$ is the exponential integral function.
\begin{proof}
See Appendix A.
\end{proof}

\subsection{Nearest F-AP mode}

Next, we focus on the nearest F-AP mode which will be triggered if the desired user $U$ cannot meet the conditions of D2D mode, i.e.,
\begin{equation}\label{1FAPP}
p_F=1-p_DP_D(T_d, \alpha_f, \alpha_d, \left\| {{X_d}} \right\|).
 \end{equation}

In this mode, the desired user will try to access its nearest F-AP $X_f$ which has the requested content $V$. The probability density function (PDF) of the distance between $X_f$ and $U$ can be derived by using a similar way as \eqref{D2Dp}
 \begin{equation}\label{cdfX}
\begin{gathered}
 {f_{\left\| {{X_f}} \right\|}}\left( {{r_f}} \right) = \frac{{\partial \left( {1 - \Pr \left( {{\rm{No\;F - AP\;closer\;than\; }}{r_f}} \right)} \right)}}{{\partial {r_f}}} \hfill\\
  = \frac{{\partial \left( {1 - \exp \left( { - \pi {\lambda _f}p_c^Fr_f^2} \right)} \right)}}{{\partial {r_f}}} = 2\pi {\lambda _f}p_c^F{r_f}{e^{ - \pi {\lambda _f}p_c^Fr_f^2}}. \hfill\\
 \end{gathered}\
 \end{equation}

 Thus, the coverage probability of nearest F-AP mode can be calculated as

 \begin{equation}\label{PF}
 \begin{array}{l}
 {P_F}\left( {{T_f},\alpha_f, p_c^F} \right) = \Pr \left( {\frac{{{P_f}{h_f}{{\left\| {{X_f}} \right\|}^{ - {\alpha _f}}}}}{{{I_{f,fu}} + {I_{d,fu}}}} \ge {T_f}} \right)\\
  \quad= \int_0^\infty  {{\rm{Pr}}\left( {{h_f} \ge \frac{{{T_f}r_f^{{\alpha _f}}}}{{{P_f}}}\left( {{I_{f,fu}} + {I_{d,fu}}} \right)} \right)} {f_{\left\| {{X_f}} \right\|}}\left( {{r_f}} \right){\rm{d}}{r_f} \\
 \quad\mathop  = \limits^{\left( a \right)} \int_0^\infty  {{L_{{I_{f,fu}}}}\left( {\frac{{{T_f}r_f^{{\alpha _f}}}}{{{P_f}}}} \right){L_{{I_{d,fu}}}}\left( {\frac{{{T_f}r_f^{{\alpha _f}}}}{{{P_f}}}} \right)} {f_{\left\| {{X_f}} \right\|}}\left( {{r_f}} \right){\rm{d}}{r_f} \\
 \quad \mathop  = \limits^{\left( b \right)} \int_0^\infty  {\exp \left( { - \pi {\lambda _f}r_f^2p_c^F\rho \left( {{T_f},{\alpha _f}} \right)} \right)} \\
  \cdot \exp \left( { - \pi {\lambda _{du}}r_f^2C\left( {{\alpha _f}} \right){{\left( {\frac{{{P_d}{T_f}}}{{{P_f}}}} \right)}^{\frac{2}{\alpha _f}}}} \right)2\pi p_c^F{\lambda _f}{r_f}{e^{ - \pi p_c^F{\lambda _f}r_f^2}}{\rm{d}}{r_f}\quad \quad  \\
 \quad = \frac{1}{{1 + \rho \left( {{T_f},{\alpha _f}} \right) + \frac{{{\lambda _{du}}}}{{{p_c^F\lambda _f}}}C\left( {{\alpha _f}} \right){{\left( {\frac{{{P_d}{T_f}}}{{{P_f}}}} \right)}^{2/{\alpha _f}}}}},\\
 \end{array}
 \end{equation}
where (a) follows the setting of $h_f \sim \rm{exp}(1)$ and the independence between inter-floor interference $I_{d,fu}$ and intra-floor interference $I_{f,fu}$; equation (b) follows the definition of the Laplace transform, and  $\rho \left( {T_f,\alpha_f } \right) = \int_{{T^{ - \frac{2}{\alpha_f }}}}^\infty  {\frac{{{T^{2/\alpha_f }}}}{{1 + {v^{\alpha_f /2}}}}{\rm{d}}v}$.

And the ergodic rate for nearest F-AP mode under the conditions in \eqref{1fap} can be given as
 \begin{equation}\label{R1F}
  \begin{gathered}
   {R_f} = p_F\mathbb{E} \left[ {\ln \left( {1 + {\gamma _f}} \right)} | \gamma_f \ge T_f\right] \hfill\\
\approx  \int_{\ln(T_f)}^{\infty}  p_F{P_F(e^\theta, \alpha_f, p_c^F)} {\rm{d}}\theta+p_F\ln(T_f)P_F(T_f, \alpha_f, p_c^F). \hfill\\
  \end{gathered}
 \end{equation}

\textbf{Special Case}: {Path loss exponent for F-AP to user link is 4 ($\alpha_f=4$), and SIR threshold $T_f > 1$}

A closed-form approximate expression can be derived in this special case, and we give the ergodic rate as Lemma 1.

\begin{lemma}
The ergodic rate for nearest F-AP mode with $\alpha_f$= 4 and $T_f > 1$ can be expressed as
\begin{equation}\label{R1FS}
  \begin{gathered}
   {R_f}^{\alpha_f=4} = p_F\mathbb{E} \left[ {\ln \left( {1 + {\gamma _f}} \right)} | \gamma_f \ge T_f\right] \hfill\\
\approx  \frac{4p_F}{\pi\sqrt{T_f}\left(1+\frac{\lambda_{du}}{p_c^F\lambda_f}\sqrt{\frac{P_d}{P_f}}\right)}
+\frac{2p_F\ln(T_f)}{\pi\sqrt{T_f}\left(1+\frac{\lambda_{du}}{p_c^F\lambda_f}\sqrt{\frac{P_d}{P_f}}\right)} \hfill\\
=\frac{2p_F(2+\ln(T_f))}{\pi\sqrt{T_f}\left(1+\frac{\lambda_{du}}{p_c^F\lambda_f}\sqrt{\frac{P_d}{P_f}}\right)}.\hfill\\
  \end{gathered}
 \end{equation}
\end{lemma}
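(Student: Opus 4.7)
The plan is to start from the approximation \eqref{R1F} for $R_f$ and specialise the two ingredients that appear in it, namely the coverage probability $P_F(T_f,\alpha_f,p_c^F)$ in \eqref{PF} and the tail integral $\int_{\ln T_f}^{\infty} P_F(e^\theta,\alpha_f,p_c^F)\,{\rm d}\theta$, to the case $\alpha_f=4$ with $T_f>1$, and then simply add the two contributions.

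My first step is to evaluate the two $\alpha_f$-dependent constants. Setting $\alpha_f=4$ in $C(\alpha_f)=2\pi\csc(2\pi/\alpha_f)/\alpha_f$ gives immediately $C(4)=\pi/2$. For $\rho(T_f,4)=\int_{T_f^{-1/2}}^{\infty} T_f^{1/2}/(1+v^2)\,{\rm d}v$ the antiderivative is $\arctan$, so $\rho(T_f,4)=\sqrt{T_f}\bigl(\pi/2-\arctan(T_f^{-1/2})\bigr)$. Under the working hypothesis $T_f>1$ (and, implicitly, $T_f$ large enough that the other denominator terms dominate the $1$), I would approximate $\arctan(T_f^{-1/2})\approx 0$ and drop the additive $1$ in the denominator of \eqref{PF}; this yields the clean approximation
\begin{equation*}
P_F(T_f,4,p_c^F)\approx \frac{2}{\pi\sqrt{T_f}\bigl(1+\tfrac{\lambda_{du}}{p_c^F\lambda_f}\sqrt{P_d/P_f}\bigr)},
\end{equation*}
after pulling $\pi\sqrt{T_f}/2$ outside as a common factor.

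Next I would handle the tail integral. Substituting $T_f\mapsto e^\theta$ into the expression just derived gives an integrand proportional to $e^{-\theta/2}$, and
\begin{equation*}
\int_{\ln T_f}^{\infty}\!\! e^{-\theta/2}\,{\rm d}\theta = 2\,T_f^{-1/2}.
\end{equation*}
Thus the first summand of \eqref{R1F} evaluates to $4p_F/\bigl[\pi\sqrt{T_f}(1+\tfrac{\lambda_{du}}{p_c^F\lambda_f}\sqrt{P_d/P_f})\bigr]$, while the second summand is obtained directly by multiplying the specialised $P_F$ above by $p_F\ln(T_f)$, giving $2p_F\ln(T_f)/\bigl[\pi\sqrt{T_f}(1+\tfrac{\lambda_{du}}{p_c^F\lambda_f}\sqrt{P_d/P_f})\bigr]$. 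Adding the two and factoring $2p_F$ in the numerator produces the stated closed form $2p_F(2+\ln T_f)/\bigl[\pi\sqrt{T_f}(1+\tfrac{\lambda_{du}}{p_c^F\lambda_f}\sqrt{P_d/P_f})\bigr]$.

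The main obstacle is not the algebra but justifying the approximations cleanly. Two separate simplifications are being invoked: (i) $\arctan(T_f^{-1/2})\approx 0$ inside $\rho$, and (ii) dropping the $+1$ and the $\rho$-contribution relative to the $\sqrt{T_f}$-scaling inside the denominator of \eqref{PF}. Both are really high-SIR statements rather than consequences of $T_f>1$ alone, so in the write-up I would either strengthen the hypothesis to ``$T_f$ large'' or, better, bound the error terms explicitly, e.g.\ by noting $\pi/2-\arctan(T_f^{-1/2})=\arctan(\sqrt{T_f})\to\pi/2$ at rate $O(T_f^{-1/2})$ and that the dropped $1$ in the denominator introduces a multiplicative factor $1+O(T_f^{-1/2})$. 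Once these two approximation steps are explicitly labelled, the remainder of the proof is a one-line elementary integration followed by factoring.
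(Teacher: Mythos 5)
Your proof is correct and follows essentially the same route as the paper's Appendix~B: evaluate $C(4)=\pi/2$ and $\rho(T_f,4)=\sqrt{T_f}\left[\pi/2-\arctan(1/\sqrt{T_f})\right]$, reduce $P_F$ to $\frac{2}{\pi\sqrt{T_f}\left(1+\frac{\lambda_{du}}{p_c^F\lambda_f}\sqrt{P_d/P_f}\right)}$, then integrate $e^{-\theta/2}$ over $[\ln T_f,\infty)$ and add the boundary term. The only difference is the micro-step you flagged as the weak point: the paper uses $\arctan(1/\sqrt{T_f})\approx 1/\sqrt{T_f}$ (valid since the argument is below $1$ for $T_f>1$), so that $\rho(T_f,4)\approx \pi\sqrt{T_f}/2-1$ and the $-1$ cancels the additive $+1$ in the denominator exactly, which removes the need for your separate ``drop the $1$'' high-SIR assumption.
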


\begin{proof}
See Appendix B.
\end{proof}

\subsection{Local distributed coordination mode}

Lastly, we discuss the local distributed coordination mode that the desired $U$
associates to multiple F-APs near to it in a user-centric cluster with a distance threshold $L_c$. This mode is triggered if $U$ don't meet the conditions of both the first two modes, i.e.,

\begin{equation}\label{ClusterP}
\begin{gathered}
p_C=(1-p_DP_D(T_d, \alpha_f, \alpha_d, \left\| {{X_d}} \right\|))\hfill\\
\quad\;\;\cdot (1-P_F(T_f, \alpha_f, p_c^F)\hfill\\
\quad\;\;=p_f(1-P_F(T_f, \alpha_f, p_c^F)).\hfill\\
\end{gathered}
 \end{equation}

To extend to access the arbitrary F-APs, the main problem here is the
exact PDF expression of ${{\sum _{{v} \in {\Psi _{C}}}{P_f}{h_v}{{\left\| {{X_v}} \right\|}^{ - \alpha_f }}}}$ is very difficult to obtain, so we tried to get around this
problem by following the Lemma 1 in \cite{lncluster}.

For any $A>0$
\begin{equation}\label{log}
\mathbb{E}[{\ln}\left( {1 + A} \right)] = \int_0^\infty  {\frac{1}{z}} \left( {1 - {e^{ - Az}}} \right){e^{ - z}}dz
\end{equation}

Therefore, with \eqref{log}, the ergodic rate expression can be written as \eqref{RC}
on the top of next page,
\begin{figure*}[ht]
 \begin{equation}\label{RC}
\begin{array}{l}
 {R_c}{\rm{ = }}{p_C}{\mathbb{E}}\left[ {\ln \left( {1 + \frac{{\sum\nolimits_{c \in {\Psi _C}} {{P_f}{h_c}{{\left\| {{X_c}} \right\|}^{ - {\alpha _f}}}} }}{{{I_{f,cu}} + {I_{d,fu}}}}} \right)} \right]
  ={p_C} \mathbb{E}\left[ {\int_0^\infty  {\frac{{{e^{ - z}}}}{z}\left( {1 - \exp \left( { - \frac{{z\sum\nolimits_{c \in {\Psi _C}} {{P_f}{h_c}{{\left\| {{X_c}} \right\|}^{ - {\alpha _f}}}} }}{{{I_{f,cu}} + {I_{d,fu}}}}} \right)} \right){\rm{d}}z} } \right] \\
 \mathop {\rm{ = }}\limits^{\left( a \right)} {p_C}{{\mathbb{E}}_{\Phi ,h,g}}\left[ {\int_0^\infty  {\frac{1}{s}\exp \left( { - s\left( {{I_{f,cu}} + {I_{d,fu}}} \right)} \right)\left[ {1 - \exp \left( { - s\sum\nolimits_{c \in {\Psi _C}} {{P_f}{h_c}{{\left\| {{X_c}} \right\|}^{ - {\alpha _f}}}} } \right)} \right]} {\rm{d}}s} \right] \\
 \mathop  = \limits^{\left( b \right)} {p_C}\int_0^\infty  {\frac{1}{s}\left\{ {{{\mathbb{E}}_{\Phi ,h,g}}\left[ {\exp \left( { - s\left( {{I_{f,cu}} + {I_{d,fu}}} \right)} \right) - \exp \left( { - s\left( {{I_{d,fu}} + \sum\nolimits_{c \in {\Phi _f}} {{P_f}{h_c}{{\left\| {{X_c}} \right\|}^{ - {\alpha _f}}}} } \right)} \right)} \right]} \right\}} {\rm{d}}s \\
  = {p_C}\int_0^\infty  {\frac{1}{s}\left\{ {{L_{{I_{f,cu}}}}\left( s \right){L_{{I_{d,fu}}}}\left( s \right) - {L_{{I_{d,fu}}}}\left( s \right){L_{\sum\nolimits_{c \in {\Phi _f}} {{P_f}{h_c}{{\left\| {{X_c}} \right\|}^{ - {\alpha _f}}}} }}\left( s \right)} \right\}} {\rm{ds}} \\
  = {p_C}\int_0^\infty  {\frac{1}{s}} \exp \left( { - \pi {\lambda _{du}}C\left( \alpha_f  \right){{\left( {{P_d}s} \right)}^{\frac{2}{{{\alpha _f}}}}}} \right)\left[ {\exp \left( { - 2\pi p_c^F{\lambda _f}\int_R^\infty  {\frac{{{P_f}sv}}{{{v^{{\alpha _f}}} + {P_f}s}}} {\rm{d}}v} \right) - \exp \left( { - \pi p_c^F{\lambda _f}C\left( \alpha_f  \right){{\left( {{P_f}s} \right)}^{\frac{2}{{{\alpha _f}}}}}} \right)} \right]{\rm{d}}s, \\
 \end{array}
   \end{equation}
   \hrulefill
\end{figure*}
where (a) follows set $s=z\cdot(I_{f,cu}+I_{d,fu})$, (b) follows ${{c} \in {\Psi _{C}}} \cup \{ {\Phi _f}/{\Psi _C}\} ={\Phi _f}$.

\subsection{Adaptation user access mode selection mechanism}

In addition, based on communication distance, nodes location, SIR QoS requirements and caching capabilities, an adaptation mode selection mechanism is presented in this subsection to take full advantages of these three modes.

\begin{algorithm}[]\tiny
   \caption{Adaptation User Access Mode Selection mechanism}
   \begin{algorithmic}[1]
   \small\STATE \textbf{Initialize}  $\Psi_{D}= \emptyset ,\Psi_{F}= \emptyset$, $\Psi_{C}= \emptyset$.\\
    \STATE \textbf{Step 1} Check the cache content of another D2D user nearby the desired user $U$ with a radius threshold $L_d$, $B(U,L_d)\cap \Phi_{du}=\{X_1,X_2,...,X_D\}$. \\
   \STATE \quad\textbf{for} $i= 1,2,...,D$ do. \\
   \STATE \quad\quad\textbf{if} $V \in C_d^i$
   \STATE \quad\quad\quad Calculate the SIR $\gamma_d^i$ from \eqref{SIRd}.\\
   \STATE \quad\quad\quad\quad\textbf{if} $\gamma_d^i \ge T_d$ .\\
   \STATE \quad\quad\quad\quad$X_i \in \Psi_{D}$ User select \textbf{D2D mode} with $X_i$. \textbf{break}
   \STATE \quad\quad\quad\quad\textbf{end if}
   \STATE \quad~~\textbf{end if}
   \STATE \quad\textbf{end for}
   \STATE \textbf{Step 2} Find the nearest F-AP of the desired user $U$ in $\Phi_{f}=\{X_1,X_2,...,X_F\}$.\\
   \STATE  \quad\textbf{Set} $\left\| {{X_f}} \right\|=\left\| {{X_1}} \right\|$.
   \STATE \quad\quad\textbf{for} $j= 2,3,...,F$ do. \\
   \STATE \quad\quad\quad\textbf{if} $\left\| {{X_j}} \right\|<\left\| {{X_f}} \right\|$ .\\
   \STATE \quad\quad\quad Nearest node $\left\| {{X_f}} \right\|=\left\| {{X_j}} \right\|$ .\\
   \STATE \quad\quad\quad\textbf{end if}
   \STATE \quad\quad\textbf{end for}
   \STATE \quad\textbf{if} $V \in C_f^f$
   \STATE \quad Calculate the SIR $\gamma_f$ with $\left\| {{X_f}} \right\|$ from \eqref{SIRf}.\\
   \STATE \quad\quad\textbf{if} $\gamma_f \ge T_f$ .\\
   \STATE \quad\quad$X_f \in \Psi_{F}$ User select \textbf{1 FAP mode} with $X_f$.
   \STATE \quad\quad\textbf{else}
   \STATE \quad\quad\quad go to \textbf{Step 3}\\
   \STATE \quad\quad\textbf{end if}
   \STATE \quad\textbf{end if}
   \STATE \textbf{Step 3} User select \textbf{local distributed coordination mode} with $\Psi_{F}=B(U,L_f)\cap \Phi_{f}=\{X_1,X_2,...,X_F\}$. \\
   \end{algorithmic}
\end{algorithm}

\section{NUMERICAL RESULTS}

In this section, the accuracy of the above ergodic rate expressions
and the impact of $\lambda$, $C$ and $T$ on rate performance are
evaluated by using Matlab with Monte Carlo simulation method.
The simulation parameters are listed as follows in Table I.

\begin{table}[!htp]
\renewcommand{\arraystretch}{1.3}
\caption{SIMULATION PARAMETERS} \label{table_example}
\centering
\begin{tabular}{{c|c}}
\hline
\bfseries Parameters & \bfseries Value \\
\hline
Number of video content $N$& $1000$\\
\hline
Caching size of D2D user $C_d$& $50$\\
\hline
Caching size of F-AP $C_d$& $200 \sim 800$\\
\hline
Intensity of D2D users $p\lambda_u$ & $1 \times 10^{-3}$\\
\hline
Intensity of F-AP nodes $\lambda_f$ & $1\times 10^{-4} \sim 1\times 10^{-3}$\\
\hline
Path loss exponent $\alpha$  & 4  \cite{tc}  \\
\hline
D2D user Zipf exponent $\sigma_d$ & 0.8 \\
\hline
F-AP Zipf exponent $\sigma_f$ & 1  \\
\hline
Transmit power of D2D user $P_d$ & 3dBm \cite{Li}\\
\hline
Transmit power of F-AP $P_f$ & 23dBm \\
\hline
D2D distance threshold $L_c$ & 16m\\
\hline
Cluster distance threshold $L_c$ & 45 $\sim$ 70m\\
\hline
\end{tabular}\vspace*{-1em}
\end{table}

\begin{figure}[!htp]
\centering
\includegraphics[width=3in]{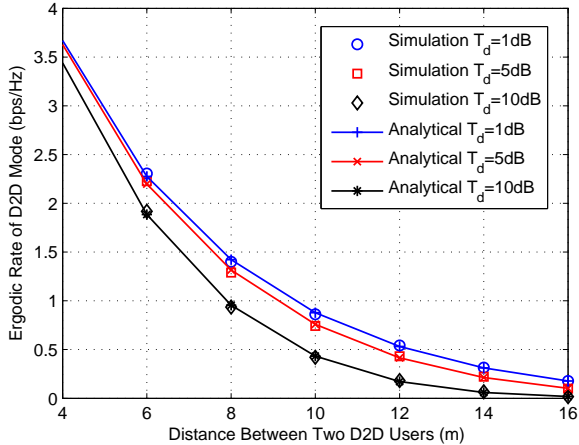}
\caption{Ergodic rate of D2D mode with different SIR thresholds versus distance between two D2D users.}
\label{f1}\vspace*{-1em}
\end{figure}

Fig. 1 shows the ergodic rate achieved by the D2D mode with the varying distance between D2D pairs in the cases of different SIR QoS thresholds $T_d$.
The analytical results closely match with the
corresponding simulation results, which validates our analysis
in Section III. It can be observed that the ergodic rate of D2D mode decreases as the distance between D2D pair increases. Similarly, the larger $T_d$ suggests that the D2D user is more strict in the quality of SIR, which leads to less user select D2D mode, and thus D2D mode ergodic rate decreases.

\begin{figure}[!htp]
\centering
\includegraphics[width=3in]{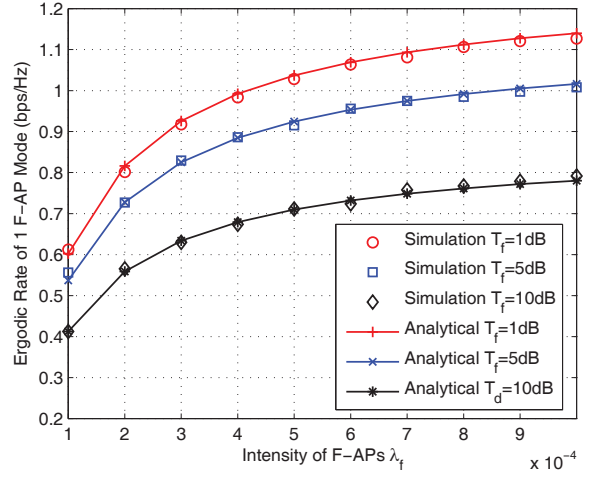}
\caption{Ergodic rate of nearest F-AP mode with different SIR thresholds versus intensity of F-AP nodes $\lambda_f$.}
\label{f1}\vspace*{-1em}
\end{figure}

Ergodic rate of nearest F-AP mode with different SIR thresholds versus intensity of F-AP nodes $\lambda_f$ is shown in Fig. 2.
As we see from Fig. 2,
the ergodic rate of nearest F-AP mode grows as the nodes intensity $\lambda_f$ increases until it reaches its upper bound. It is worth noting that the ergodic rate of nearest F-AP mode seems lower than the D2D mode, this is because the users which have good SIR are more inclined to selected D2D mode. On the other hand, when the SIR threshold increases, the ergodic rate of nearest F-AP mode follows similar trend as D2D mode.

\begin{figure}[!htp]
\centering
\includegraphics[width=3in]{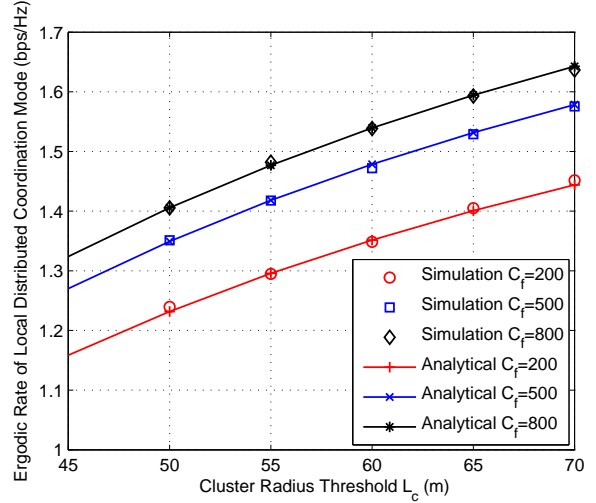}
\caption{Ergodic rate of local distributed coordination mode with different F-AP cache size versus cluster radius threshold $L_c$, $\lambda_f=2\times 10^{-4}$.}
\label{f1}\vspace*{-1em}
\end{figure}

In Fig. 3, we compare the ergodic rate of local distributed coordination mode with different F-AP cache sizes versus cluster radius threshold $L_c$. Since the number of F-APs in the cluster increases with $L_d$, it means more F-APs are serving the desired user with the increase of signal and the decline of interference. And it leads to the enlargement of cluster SIR and results in the improvement of
ergodic rate of local distributed coordination mode. Similarly, the larger cache size of F-AP $C_f$ suggests that there are more opportunities for the desired user to get the video content it needs, which leads to a higher ergodic rate.

\section{CONCLUSION}
In this paper, the expressions of coverage probability and ergodic rates in downlink
F-RAN system under three user access modes have been derived.
The impact of the cache size, SIR threshold cluster radius and nodes intensity on the coverage probability and ergodic rate are researched. Moreover, an adaptation user access mode selection mechanism is proposed to improve F-RAN system performances.

\section{ACKNOWLEDGEMENT}
This work is supported in part by the National Natural Science Foundation
of China (Grant No. 61222103 and 61361166005), the National Basic Research Program
of China (973 Program) (Grant No. 2013CB336600).

\appendix
\subsection{Proof of $R_d$}
For a positive continuous random variable $A$, we can use the following formula for computing its expectation

\begin{equation}\label{E1}
\begin{array}{l}
\mathbb{E}\left[ {A\left| {A \ge W} \right.} \right] \\
  = \int_W^\infty  {t{f_A}\left( t \right)} {\rm{d}}t = \int_W^\infty  {\int_0^t {{f_A}\left( t \right)} {\rm{d}}a} {\rm{d}}t \\
  = \int_0^W {\int_W^\infty  {{f_A}\left( t \right)} } {\rm{d}}t{\rm{d}}a + \int_W^\infty  {\int_a^\infty  {{f_A}\left( t \right)} } {\rm{d}}t{\rm{d}}a \\
  = \int_0^W {\Pr \left( {A \ge W} \right)} {\rm{d}}a + \int_W^\infty  {\Pr \left( {A \ge a} \right)} {\rm{d}}a \\
  = W\Pr \left( {A \ge W} \right) + \underbrace{\int_W^\infty  {\Pr \left( {A \ge a} \right)} {\rm{d}}a}_{S}.\\
\end{array}
 \end{equation}

Next, we focus on the second term of \eqref{E1}, after changing variables with $W=\ln(T_d)$, $A=\ln(1+\gamma_f)$ and $a=\theta$ the expression of this term can be given as

\begin{equation}\label{RDproof}
\begin{array}{l}
S=\int\limits_{ \ln(T_d)}^\infty  {\Pr \left( {\frac{{{P_d}{h_d}{{\left\| {{X_d}} \right\|}^{ - {\alpha _d}}}}}{{{I_{d,du}} + {I_{f,du}}}} > {e^{{\theta _d}}} - 1} \right)} {\rm{d}}{\theta _d} \\
\approx \int\limits_{ \ln(T_d)}^\infty  {{L_{{I_{d,du}}}}\left( {\frac{{{e^{{\theta _d}}}{{\left\| {{X_d}} \right\|}^{{\alpha _d}}}}}{{{P_d}}}} \right){L_{{I_{f,du}}}}\left( {\frac{{{e^{{\theta _d}}}{{\left\| {{X_d}} \right\|}^{{\alpha _d}}}}}{{{P_d}}}} \right)} {\rm{d}}{\theta _d}\\
= \int\limits_{ \ln(T_d)}^\infty  {\exp \left( { - \pi {{\left\| {{X_d}} \right\|}^{\frac{{2{\alpha _d}}}{{{\alpha _f}}}}}\beta C\left( {{\alpha _f}} \right)e^  {\frac{2{\theta _d}}{{{\alpha _f}}}}} \right)} {\rm{d}}{\theta _d} \\
= -\frac{{{\alpha _f}}}{2}{\rm{Ei}}\left( { -T_d^{\frac{2}{\alpha_f}} \pi {{\left\| {{X_d}} \right\|}^{\frac{{2{\alpha _d}}}{{{\alpha _f}}}}}\beta C\left( {{\alpha _f}} \right)} \right), \\
\end{array}
 \end{equation}
where $\beta=\left( {{\lambda _{du}} +   {{\left( {{P_f}}/{{P_d}} \right)}^{{2}/{{{\alpha _f}}}}}{\lambda _f}} \right) $, and the proof is finished.

\subsection{Proof of lemma 1}
we first drive the coverage probability of nearest F-AP mode with $\alpha_f=4$ and $T_f > 1$, which can be denoted as
\begin{equation}\label{RFproof}
\begin{array}{l}
 P_F^{\alpha_f  = 4}\left( {{T_f},p_c^F} \right) = \frac{1}{{1 + \rho \left( {{T_f},4} \right) + \frac{{{\lambda _{du}}}}{{{p_c^F\lambda _f}}}C\left( 4 \right)\sqrt {\frac{{{P_d}{T_f}}}{{{P_f}}}} }} \\
  = \frac{1}{{1 + \sqrt {{T_f}} \int_{1/\sqrt {{T_f}} }^\infty  {\frac{1}{{1 + {v^2}}}{\rm{d}}v}  + \frac{{\pi {\lambda _{du}}}}{{2{p_c^F\lambda _f}}}\sqrt {\frac{{{P_d}{T_f}}}{{{P_f}}}} }} \\
  = \frac{1}{{1 + \sqrt {{T_f}} \left[ {\frac{\pi }{2} - \arctan \left( {1/\sqrt {{T_f}} } \right)} \right] + \frac{{\pi {\lambda _{du}}}}{{2{p_c^F\lambda _f}}}\sqrt {\frac{{{P_d}{T_f}}}{{{P_f}}}} }} \\
  \mathop  \approx \limits^{\left( a \right)} \frac{1}{{1 + \sqrt {{T_f}} \left[ {\frac{\pi }{2} - \left( {1/\sqrt {{T_f}} } \right)} \right] + \frac{{\pi {\lambda _{du}}}}{{2{p_c^F\lambda _f}}}\sqrt {\frac{{{P_d}{T_f}}}{{{P_f}}}} }} \\
  = \frac{2}{{{{\pi \sqrt {{T_f}} }}\left( {1 + \frac{{{\lambda _{du}}}}{{{p_c^F\lambda _f}}}\sqrt {\frac{{{P_d}}}{{{P_f}}}} } \right)}}, \\
 \end{array}
 \end{equation}
where (a) follows the property of the inverse trigonometric functions that $\arctan(A) \approx A$ if A is smaller than 1, i.e., $T_f \ge 1$.

Then, substituting \eqref{RFproof} into \eqref{E1} with $W=\ln(T_f)$ and we obtain the result.

\end{document}